\documentclass{article}

\usepackage[preprint]{neurips_2026}

\usepackage[utf8]{inputenc} 
\usepackage[T1]{fontenc}    
\usepackage{hyperref}       
\usepackage{url}            
\usepackage{booktabs}       
\usepackage{amsfonts}       
\usepackage{nicefrac}       
\usepackage{microtype}      
\usepackage{xcolor}         
\usepackage{graphicx}
\usepackage{amsmath}
\usepackage{amsthm}
\newtheorem{definition}{Definition}
\newtheorem{assumption}{Assumption}
\newtheorem{proposition}{Proposition}

\title{Structural Bottlenecks on Frequency Representation in End-to-End Audio Models}

\author{%
  Nicole Cosme-Clifford \\
  Yale University\\
  New Haven, CT \\
  \texttt{nicole.cosme@yale.edu} \\
}

\begin{document}

\maketitle

\begin{abstract}
End-to-end neural audio models achieve high-fidelity compression and generation. We might read that performance as evidence they directly represent interpretable features such as pitch and timbre, but a model can produce plausible outputs without doing so. A model may encode these features in any reachable basis, but regardless of which, the features are well described as compositions of time-frequency-localized primitives. Whether state-of-the-art encoders preserve access to these primitives, and thus to compositions of them, remains unclear. Through theoretical analysis and controlled experiments, we show that several state-of-the-art strided convolutional encoders impose two structural bottlenecks, both predictable from architecture and signal structure, on access to these primitives: (1) they collapse primitives into alias equivalence classes, establishing a bound on representational capacity, and (2) they limit the frequency resolution available to learned filters, restricting separability. For well structured data, we find collapse rates of 31-35\% and filter bandwidths 9-35x above the theoretical resolution bound, confirming that both bottlenecks arise under realistic signal conditions. We then introduce Gabor Latent Refactorization (GLRF), a lightweight post-hoc intervention that re-expresses encoder latents in a frequency-localized basis, reducing filter bandwidths from 10–35x to 1.5–3x of the theoretical resolution bound while preserving reconstruction fidelity and improving control over attributes like pitch. These results show that the encoders in question predictably degrade access to frequency-localized primitives, entangling the features that depend on them, and that a lightweight, retraining-free intervention can recover much of that access, improving steerability and interpretability.

\end{abstract}

\section{Introduction}
Modern neural audio systems achieve state-of-the-art performance across compression and generation, with scaling consistently improving empirical results~\cite{defossez2022high, kumar2023high, evans2024fast, evans2025stable}. It is tempting to read that strong performance as evidence that interpretable features like pitch and timbre are directly encoded in these systems' learned representations. Yet, interpretability studies have largely not recovered them ~\cite{singh2025discovering, beguvs2022interpreting, 
lee2017raw, vu2024toward, muckenhirn2019understanding}. Have these models learned interpretable time-frequency structure, or can strong performance exist without it? We argue the latter: state-of-the-art encoders foreclose access to the primitives that underlie these features, and the features do not reappear in some alternative learned form.

Many physical sound-generating systems are, over short timescales, well described by normal modes ~\cite{oppenheim99, morse1986theoretical}. Sound can then be locally approximated as a superposition $x(t)=\sum_i c_i(t)$, where each \(c_i(t)\) is a narrowband oscillation centered at a distinct frequency, \(f_i\). Human auditory perception exploits this structure, as human-meaningful features (e.g., pitch, timbre) follow from relationships between these time-frequency-localized primitives ~\cite{bregman1994auditory}. Whether artificial systems preserve access to these primitives (and compositions of them) is determined in part by encoder architecture.

State-of-the-art models increasingly operate end-to-end over raw waveforms, using strided convolutional encoders to compress high-dimensional signals into compact latents~\cite{defossez2022high, kumar2023high, evans2024fast, evans2025stable, zeghidour2021soundstream, dhariwal2020jukebox}. We show that this class of encoders exhibits two structural bottlenecks on structured control of frequency-localized primitives and that both are architecturally determined. The first, injectivity failure, causes distinct frequency components in the input signal to collapse into indistinguishable proxy components in the encoder's representation space due to downsampling-induced aliasing; we derive an analytical bound on how many components can survive this collapse and confirm it empirically across three state-of-the-art models with correlation $r \approx 0.99$ between predicted and observed collapse rates. The second, separability failure, occurs when components survive collapse but remain inseparable by learned filters, preventing independent manipulation; we show that learned filters operate $10 - 35\times$ above the theoretical resolution limit set by the encoder's receptive field.

This corroborates findings from recent interpretability studies on end-to-end neural audio models, which rarely recover frequency-localized primitives or superpositions of them from learned representations~\cite{singh2025discovering, beguvs2022interpreting, 
lee2017raw, vu2024toward, muckenhirn2019understanding}. It is also consistent with the observation that architectures with explicit frequency structure tend to produce more interpretable representations~\cite{zeghidour2021leaf, ravanelli2018interpretable, 
bruna2013invariant, anden2014deep, caillon2021rave}.

There are practical stakes for this. Structured frequency control could enable music generation systems to manipulate pitch and timbre independently, speech synthesis models to modify prosody without affecting speaker identity, and scientific audio tools to isolate and intervene on specific signal components. However, if the representational interface between the encoder and downstream modules does not permit control over frequency structure, downstream components may not recover it, making the encoder's representational geometry a fundamental constraint on capability.

This work is situated within a growing research program asking how to better utilize the representational budget that existing architectures already provide. Recent work has asked analogous questions of attention mechanisms in LLMs ~\cite{team2026attention, yuan2025native}. In mechanistic interpretability, a parallel tradition asks whether learned representations can be re-expressed in bases that expose interpretable structure without compromising the underlying model~\cite{elhage2022toy, bricken2023monosemanticity}. We ask the same kind of question about strided convolutional audio encoders: given a fixed architecture with a fixed parameterization, is available capacity spent exposing human-meaningful structure? We focus on end-of-encoder latents, as these constitute the interface available to downstream modules. We show that the answer is largely no: the structure is often implicitly present in the representation but inaccessible. 

Our contributions are as follows:

\textbf{Predictive framework.} We identify two structural conditions on primitive accessibility — an injectivity condition governing representational collapse under downsampling, and a separability condition governing frequency resolution — and show that both are predictable from architectural parameters prior to training. We also show that three state-of-the art models (EnCodec, DAC, and Stable Audio) violate both conditions under realistic signal settings. 
    
\begin{itemize}
    \item \textbf{Injectivity.} Given signal structure and stride schedule, the number of injectively representable components can be computed via a forward model. Collapse rates predicted by this model correlate with observed rates at $r \approx 0.99$ across 643 signal configurations. Low factor-complexity sampling rates (22.05, 44.1 kHz) substantially reduce collapse compared to highly composite rates; injectivity failure cannot be corrected post-hoc and requires upstream stride schedule design.

    \item \textbf{Separability.} The receptive field sets a hard upper bound on achievable filter resolution, and a second forward model predicts where learned filters land relative to that bound from kernel geometry alone, without access to learned weights. 
\end{itemize}
    
\textbf{Intervention for separability.} We introduce Gabor Latent Refactorization (GLRF), a lightweight post-hoc method that re-expresses encoder latents in a frequency-localized basis without retraining, reducing filter bandwidths to 1.5--3$\times$ the resolution limit while preserving reconstruction fidelity and improving frequency-structured representational control.

All pretrained models are publicly available, and implementation details are in the appendix.

\section{Injectivity}
\label{sec:injectivity}
We model the encoder as a mapping $E : x(t) \mapsto z \in \mathbb{R}^d$ that compresses the time-domain input into a lower-dimensional latent via a composition 
of convolutional layers, nonlinearities, and strided downsampling. While defined in the time domain, this transformation induces a corresponding transformation over frequency-localized signal components, mapping approximately narrowband components $c_i(t)$ to latent responses $l_i(t)$. For a narrowband component $c_i$ centered at frequency $f_i$, we define its latent proxy $l_i$ as the latent response associated with $c_i$ under spectral analysis of encoder outputs. An encoder is injective with respect to a component set if distinct components map to single, distinguishable proxies. This definition provides a consistent object for evaluating whether component structure is preserved or entangled in the representation. 

\subsection{Theoretical Injectivity Bound}
\label{sec:injectivity-theoretical}
Injectivity is largely constrained by the encoder's downsampling operations, which reduce the effective sampling rate and induce equivalence classes over component frequencies.

\begin{assumption}[Narrowband components]
\label{ass:narrowband}
Each component $c_i(t)$ is sufficiently narrowband that its behavior 
under downsampling is determined by its dominant center frequency $f_i$.
\end{assumption}

Classical aliasing treats downsampling as a local phenomenon: a frequency 
$f$ folds to $f \bmod f_s$ under downsampling at rate $f_s$ ~\cite{oppenheim99}. This 
framing is sufficient for isolated components but obscures the global 
structure that emerges when multiple components are present simultaneously. 
We extend this view to characterize the collective behavior of a component 
set under stacked downsampling.

\begin{definition}[Encoder-induced alias class]
\label{def:alias-class}
Let $E$ be an encoder with effective latent sampling rate $f_s$ and let 
$\{f_i\}_{i=1}^n$ be the center frequencies of a narrowband component 
set. The encoder induces a map $\phi : f_i \mapsto f_i \bmod f_s$ via 
cumulative downsampling. Components $c_i$ and $c_j$ belong to the same 
\emph{encoder-induced alias class} if $\phi(f_i) = \phi(f_j)$. The 
number of distinct alias classes intersecting $\{f_i\}_{i=1}^n$ is 
denoted $q \leq n$.
\end{definition}

\begin{proposition}[Injectivity limit under downsampling]
\label{prop:injectivity}
Under Assumption~\ref{ass:narrowband}, at most $q$ components from 
$\{f_i\}_{i=1}^n$ can be injectively represented. When $n > q$, at 
least two components share an alias class 
(Definition~\ref{def:alias-class}) and become indistinguishable in the 
latent representation.
\end{proposition}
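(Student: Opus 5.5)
The plan is to reduce the statement to a factorization argument followed by the pigeonhole principle. The first step is to formalize what Assumption~\ref{ass:narrowband} buys us. Because each $c_i$ is narrowband, its contribution to the encoder output after cumulative downsampling depends on $f_i$ only through the folded frequency $\phi(f_i) = f_i \bmod f_s$. I would make this precise in two moves. First, model each stride-$s_k$ layer as a sampling operation that maps a sinusoid at $f$ to one at $f \bmod (f_{\text{in}}/s_k)$. Second, compose these maps over the stride schedule. The composition of the successive modular reductions equals reduction modulo the final rate $f_s$, since each intermediate rate is an integer multiple of $f_s$. Hence the latent proxy satisfies $l_i = L(\phi(f_i))$ for some map $L$ from folded frequencies to latent responses. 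This is exactly the sense in which ``spectral analysis of encoder outputs'' assigns a proxy to $c_i$.

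With this factorization, the map $c_i \mapsto l_i$ factors through the set $\phi(\{f_i\}_{i=1}^n)$, which by Definition~\ref{def:alias-class} has exactly $q$ elements. The image of the component set in proxy space therefore has at most $q$ distinct elements. Injective representation of a subset $S \subseteq \{c_i\}$ requires distinct proxies for distinct members of $S$, so $|S| \le q$; this gives the first claim. For the second claim, suppose $n > q$. The pigeonhole principle applied to $\phi$ restricted to $\{f_i\}_{i=1}^n$, with $n$ inputs and $q$ values, yields $i \neq j$ with $\phi(f_i) = \phi(f_j)$. The factorization then gives $l_i = l_j$, so $c_i$ and $c_j$ are indistinguishable in the latent.

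The main obstacle is the factorization step itself. The encoder is not a pure downsampler: it interleaves convolutions and nonlinearities. Convolutions are harmless, since they are LTI and only rescale the response at each frequency. Nonlinearities, however, generate harmonics and intermodulation products, so a strictly literal ``depends only on $\phi(f_i)$'' could fail. I would handle this by reading Assumption~\ref{ass:narrowband} as stipulating that the behaviour of each component under downsampling is governed by its center frequency. Concretely, I take the proxy $l_i$ to be the latent spectral component at the folded location $\phi(f_i)$, whatever nonlinear filtering has altered its amplitude or phase. Under that reading, two components sharing a folded location occupy the same spectral bin of the latent and cannot be separated by any analysis of it. I would state this identification explicitly as the modeling convention the proposition rests on, rather than attempt a fully general nonlinear argument.
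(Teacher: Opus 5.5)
Your proposal is correct and is essentially the paper's argument: both reduce the claim to the pigeonhole principle over the $q$ alias classes, with the step ``same alias class implies indistinguishable proxy'' taken as a modeling premise. The paper builds that premise into its definition of the equivalence relation $\sim_E$, you state it as the convention that a proxy is the latent spectral component at $\phi(f_i)$, and your check that per-stage modular reductions compose to reduction modulo $f_s$ is a detail the paper leaves implicit. One small correction: linearity of the convolutions is not what makes the factorization hold. A pre-stride filter's gain is evaluated at the unfolded $f_i$, so two same-class components can carry different per-channel amplitudes and phases at the shared folded frequency. It is your location-based convention, the same premise the paper assumes, that makes $l_i$ depend on $f_i$ only through $\phi(f_i)$.
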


\begin{proof}[Proof sketch]
When $n > q$, the pigeonhole principle guarantees at least two components 
share an alias class and therefore map to the same latent frequency under 
$\phi$. Full proof in Appendix~\ref{app:app_proof_of_prop1}.
\end{proof}

The value of $q$ depends on the arithmetic relationship between 
$f_i$ and $f_s$. For a single component at frequency $f_i$, downsampling at rate $f_s$ maps $f_i$ to $f_i$ mod $f_s$. For a harmonic signal with fundamental $f_0$, the harmonics ${kf_0}$ each fold independently under this map, generating a sequence of folded frequencies ${kf_0}$ mod ${f_s}$. When $f_0/f_s$ is rational with denominator $r$ in lowest terms, this sequence is periodic with period $r$, so all harmonics partition into at most $r$ equivalence classes, yielding $q \leq r$. A small denominator forces many harmonics into few classes and collapse is heavy; a large denominator spreads them out. When $f_i/f_s$ is irrational the sequence does not repeat and the components remain distinct. A third regime arises in practice: when $f_i/f_s$ is close to a rational with small denominator (i.e., slightly detuned harmonic stacks), the orbit does not close mathematically, but folded proxies may land indistinguishably close (this becomes a separability problem). Therefore, near-harmonic stacks may approximate the collapse behavior of harmonic stacks. All regimes are common in practice and injectivity is studied empirically in Section~\ref{sec:injectivity-empirical}. 

\subsection{Empirical Validation of Injectivity Bound}
\label{sec:injectivity-empirical}

\paragraph{Models}
We evaluate whether the collapse predicted by Proposition~\ref{prop:injectivity} 
occurs in practice across three state-of-the-art architectures spanning compression 
and generation (EnCodec, DAC, and Stable Audio), each employing a strided 
convolutional encoder to compress raw audio to latent representations.

We assume these encoders do not learn effective anti-aliasing ~\cite{zhang2019making} prior to downsampling. None of the three architectures constrain or explicitly implement it~\cite{defossez2022high, evans2025stable, kumar2023high}. Our empirical results directly support this: if anti-aliasing were active, observed proxies would not match theoretical predictions, yet we find $r \approx 0.99$ between predicted and observed collapse rates, consistent with classical downsampling theory.

\paragraph{Data}
We construct input signals composed of controlled sets of narrowband components with known center frequencies ${f_i}$ (see Appendix~\ref{app:app_injectivity_analysis}). In Experiment 1, we vary the arithmetic relationship between ${f_i}$ and $f_s$, letting signals span the range of harmonicity and commensurability. In Experiment 2, we fix a single dataset of musically realistic stacks and perform a greedy search over stride schedules, allowing commensurability to vary naturally relative to the local sampling rates induced by these schedules. All experiments use synthetic signals, as measuring collapse requires ground truth component frequencies. Theoretical bounds are independent of this assumption and should hold for real audio. However, empirical results on natural signals may differ due to inharmonicity and broadband content.  

\paragraph{Metrics}
To approximate the encoder-induced equivalence classes, we analyze the spectral 
structure of encoder outputs. For each output, we compute the magnitude spectrum 
per latent channel via FFT and identify dominant peaks via a fixed peak detection procedure (threshold, prominence, and width criteria are detailed in Appendix~\ref{app:app_injectivity_analysis}; results are stable across a range of detection hyperparameters). Each detected peak is treated as a latent proxy corresponding to one or more input components. This provides an approximation 
of equivalence classes, and consistency across metrics (defined below) indicates 
that detected proxies correspond to genuine structural features of the representation.

We estimate $q$ directly from the input frequencies and compare it to the number of detected proxies. Then we quantify empirical injectivity via the collapse rate, defined as the fraction of input components that map to shared latent proxies. A collapse rate of zero indicates that each component is assigned to a distinct proxy, while higher values indicate increasing violations of injectivity. We additionally report (i) mode recovery, the fraction of input components whose frequencies are recovered within very small tolerance, and (ii) energy concentration, measuring the fraction of total energy captured by detected proxies. Note that energy concentration does not approach 1.0, as the encoder introduces additional spectral content through nonlinearities, and this is not accounted for in the predicted alias set. This energy does not affect the validity of the collapse rate metric, which measures only whether predicted components map to distinct proxies.

\paragraph{Results}
Collapse rates are well-predicted by stride schedule and signal frequency structure, independently of learned weights. Across all models and 643 signal configurations, predicted and observed collapse rates correlate at $r \approx 0.99$ (Table~\ref{tab:injectivity}), with residual error attributable to spectral leakage near the peak detector's resolution limit. Results are deterministic given fixed inputs and models; variability in the mode recovery metric arises only from signal configuration, which we control explicitly. 

As predicted by Section~\ref{sec:injectivity-theoretical}, harmonic and near-harmonic configurations with high commensurability exhibit high collapse rates consistent with small $q$, while strongly inharmonic or incommensurate configurations show near-zero collapse. This dependence is directly actionable: highly composite rates (e.g., multiples of 8 kHz) yield substantially higher collapse than rates such as 22.05 kHz or 44.1 kHz, whose factor structure produces fewer alias collisions for typical harmonic signals (Appendix \ref{app:app_injectivity_analysis}). Therefore, given stride schedule $S$ and signal frequencies $\{f_i\}$, the number of injectively representable components $q$ is computed per signal analytically via Proposition~\ref{prop:injectivity}, yielding total predicted collapse rates $\hat{Q} = \frac{n - q}{n}$, independent of learned weights.

\begin{table}[t]
\centering
\begin{tabular}{lrrrrrr}
\toprule
Model & Eff. SR & $\hat{Q}$ & $Q$ & $r$ & Mode Rec. & Energy Conc. \\
\midrule
EnCodec         & 150.00 & 0.311 & 0.349 & 0.990 & $0.962 \pm 0.103$ & $0.640 \pm 0.043$ \\
DAC             & 86.13 & 0.280 & 0.312 & 0.991 & $0.984 \pm 0.072$ & $0.577 \pm 0.046$ \\
Stable Audio    & 21.53 & 0.286 & 0.320 & 0.989 & $0.984 \pm 0.079$ & $0.446 \pm 0.165$ \\
\bottomrule
\end{tabular}
\caption{Injectivity analysis across models. $\hat{Q}$: collapse rate predicted analytically from stride schedule and signal structure alone. $Q$: empirically observed collapse rate. Pearson $r$ measures per-signal correlation across all signal configurations ($n=643$). Mode recovery and energy concentration measure spectral alignment quality (mean ± SD).}
\label{tab:injectivity}
\end{table}

\section{Separability}
\label{sec:separability}
\subsection{Theoretical separability}
\label{sec:theoretical-separability}

An encoder $E$ is separable with respect to a component set $\{c_i\}$ if the latent proxy for each component $c_i$ can be accessed independently. Variations in $c_i$ should induce changes in a localized subset of latent coordinates, with minimal interference from others. While injectivity failure is irreversible, separability failure means components remain present but inaccessible, preventing independent manipulation. Classical signal processing provides two bounds on the frequency resolution required for separability at the level of a single filter~\cite{oppenheim99, morse1986theoretical}. 

First, a single filter of kernel size $K$ at sampling rate $f_s$ resolves frequencies on the order of $\Delta f = f_s/K$. This bound applies to filters treated as isolated operators. However, we show in Section ~\ref{sec:separability-empirical} that learned filters for all models achieve finer resolution than this bound predicts alone. This motivates a second, tighter bound. 

A latent feature cannot depend on input samples outside its receptive field, so the cumulative receptive field of the encoder sets a bound on the temporal support of any feature the encoder can compute. By the uncertainty principle~\cite{oppenheim99, morse1986theoretical}, temporal support $R$ implies frequency resolution no finer than 
\begin{equation}
    \Delta f = \frac{f_s}{R}.
    \label{eq:rf-bound}
\end{equation}
In linear systems, stacked convolutions accumulate resolution toward this bound with receptive field expansion~\citep{oppenheim99}. For the nonlinear systems studied here, we treat this as an upper bound on achievable resolution. So, the receptive field sets the ceiling, but internal operations determine what is realized within it.

\begin{assumption}[Receptive field resolution bound]
\label{ass:receptive-field}
The maximum frequency resolution of a learned filter in a strided 
convolutional encoder is bounded by $\Delta f = f_s/R$, where $R$ is 
the cumulative receptive field.
\end{assumption}

\begin{definition}[Separability ratio]
\label{def:sep-ratio}
Let $B$ denote the bandwidth of a learned filter and $\Delta c$ the 
spacing between adjacent component frequencies. The \emph{separability 
ratio} $\rho = \Delta c / B$ measures whether adjacent components fall 
within the filter's resolution limit. Under Assumption~\ref{ass:receptive-field}, when $\rho < 1$ adjacent components cannot be independently accessed, regardless of injectivity.
\end{definition}

\paragraph{Predictive model for effective filter bandwidth.}
We provide a lightweight model to ballpark, before training, where an encoder's most selective single-lobe filters can land, as these filters are most likely to be selective for individual frequency components. The model uses only kernel sizes, dilations, and stride schedules, with no learned weights and no fitted parameters. 

Each selectivity-carrying layer (kernel size $k > 1$) is modeled as a tractable minimum-uncertainty filter: a Gaussian-tapered impulse response of effective extent $(k_l -1)d_l+1$ (full definition in Appendix \ref{app:sep_sim}). Responses are accumulated by sequential convolution. At each strided stage the response is decimated and re-expressed at the new local sampling rate, so stride contributes through grid rescaling rather than attenuation. Snake activations (used in DAC and StableAudio) are non-saturating and therefore apply no meaningful net attenuation, leaving the main-lobe bandwidth effectively unchanged (validated in Appendix \ref{app:sep_sim}). ELU (used in EnCodec) saturates and contributes a small, bounded broadening (Appendix \ref{app:sep_sim}), although we omit this broadening here to report the purely geometric prediction. 

This is an intentionally simplified structural approximation meant to help anticipate a practical separability ceiling at design time. The results suggest the approximation reliably captures the dominant sources of bandwidth variation: predicted best-case bandwidths fall within 2 Hz of the empirically measured values across all three architectures (Table \ref{tab:separability}). We will devote future work to tightening this model. 

\subsection{Empirical Separability}
\label{sec:separability-empirical}
We evaluate separability by comparing learned filter bandwidths to the resolution limits implied by Section~\ref{sec:theoretical-separability}. For each model, we estimate the frequency response of filters in the encoder's final convolutional layer and measure their bandwidth $B$, restricting analysis to smooth filters with a single dominant lobe (see Appendix~\ref{app:app_separability_analysis}). This provides a conservative estimate of achievable resolution.

Across all models, observed bandwidths fall far above the theoretical receptive field limit, yet they settle near the limit predicted by the 
structural bandwidth model (Appendix \ref{app:sep_sim}; Table~\ref{tab:separability}), confirming the underutilization predicted by Assumption~\ref{ass:receptive-field}. Filters do improve upon the single-kernel bound $\Delta f = \frac{f_s}{K}$, confirming that stacking accumulates some resolution, but the gap to the receptive field bound remains large. As a result, a significant fraction of adjacent components satisfy $\rho = \frac{\Delta c}{B} < 1$, indicating they cannot be independently accessed even when injectivity is preserved.

\begin{table}[t]
\centering
\begin{tabular}{lrrrrr}
\toprule
Model & $K$ & Best BW (Hz) & Best/RF & Pred. (Hz) & Err. (Hz) \\
\midrule
EnCodec      & 7 & 14.16 & 10$\times$ & 16.031 & 1.871 \\
DAC          & 3 &  8.30 & 35$\times$ & 8.248 & 0.052 \\
Stable Audio & 3 &  1.47 & 31$\times$ & 2.063 & 0.593 \\
\bottomrule
\end{tabular}
\caption{Separability analysis. $K$: final-layer kernel size. Best BW: 
narrowest measured filter bandwidth among smooth filters. Best/RF: ratio 
to receptive field bound. Predicted BW: simulated bandwidth from kernel geometry alone. Err: error between simulated bandwidth and measured bandwidth.}
\label{tab:separability}
\end{table}

\begin{figure}[t]
    \centering
    \includegraphics[width=0.6\linewidth]{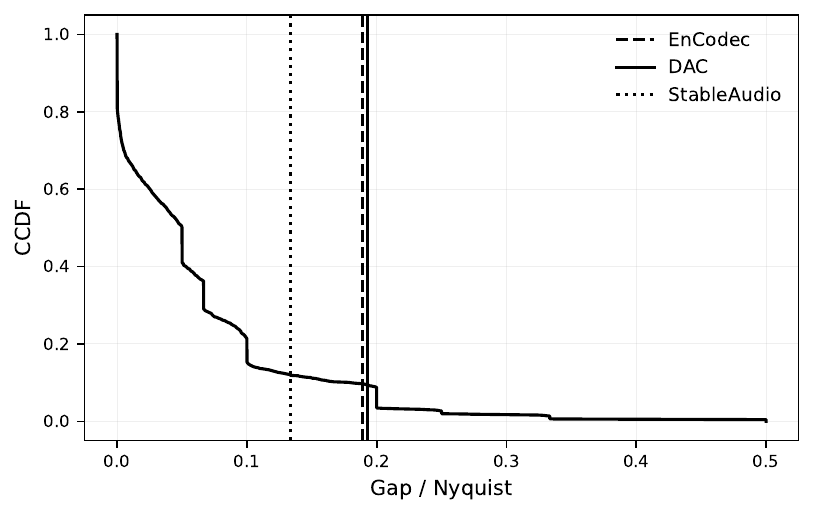}
    \caption{CCDF of filter bandwidths relative to the receptive field resolution limit  $\Delta f = f_s/R$. For typical harmonic signals, adjacent component spacings fall  well within these bandwidths ($\rho = \Delta c / B < 1$), indicating that frequency-adjacent primitives cannot be independently accessed even when injectivity is preserved.}
    \label{fig:CCDF}
\end{figure}

\section{Intervention}
\label{sec:intervention}
Sections~\ref{sec:injectivity} and ~\ref{sec:separability} propose two structural limitations on primitive accessibility: injectivity failure due to downsampling, and limited separability due to insufficient frequency resolution. Injectivity failure depends on the sampling rate and likely cannot be corrected post hoc, so the appropriate fix is upstream stride schedule design. Separability, however, can be improved by re-expressing the latent representation in a frequency-localized basis.

Purpose-built frequency-structured front ends address both bottlenecks by 
design~\cite{choi2016automatic, choi2017convolutional, kong2020panns, 
ravanelli2018interpretable, zeghidour2021leaf, bruna2013invariant, caillon2021rave}. 
However, these require retraining or significant architectural modification and are not the dominant approach. We instead ask what structured control can be recovered from an existing trained encoder without touching its weights or making significant architectural modifications, which is a practically relevant question for deployed systems. The proposed intervention is intended as a proof of concept: we only claim that it recovers meaningful structure within the existing parameter space and provides a concrete starting point for future work.

\subsection{Methods}
\textbf{Gabor Latent Refactorization.} 
We apply a fixed Gabor filterbank to encoder latents $z \in \mathbb{R}^{C \times T}$, parameterized to match the theoretical frequency resolution implied by the receptive field $\Delta f = \frac{f_s}{R}$. The filterbank channels are localized in frequency and tile the frequency spectrum up to Nyquist at the resolution permitted by the receptive field. This produces a decomposition $\tilde{z} \in \mathbb{R}^{C \times 2F \times T}$, where the factor of 2 indexes real and imaginary components. Prior to filterbank application, latents are normalized per-channel by temporal mean and variance. We then learn a channel-wise linear mapping 
\begin{equation}
    M: \mathbb{R}^{C \times 2F \times T} \mapsto \mathbb{R}^{C \times T}
    \label{eq:mapping}
\end{equation}
such that $z \approx M\tilde{z}$ by minimizing
 \begin{equation}
    \left\| z - M\tilde{z} \right\|^2,
    \label{eq:recon-loss}
\end{equation} fit via closed-form ridge regression on a set of synthetic signals. This acts as a change of basis that exposes frequency-localized components without significantly altering the underlying representation. Full filterbank parameters are given in Appendix~\ref{app:app_intervention_details}; computational cost scales linearly with latent size and is negligible relative to the cost of training the encoder and decoder (Appendix~\ref{app:app_compute_resources}).

\textbf{Controllability evaluation. } We evaluate controllability via targeted component substitution, replacing one frequency component in a mixture while holding all others fixed. This is a stricter test of independent component access than interpolation or reconstruction: a representation that only preserves components in entangled form can fail this test even if reconstruction quality is high. For a source component $f_a$ and target component $f_b$ embedded in a mixture, we ask whether an intervention can be identified that reliably (a) reduces energy at the latent proxy for $f_a$ and (b) increases energy at the latent proxy for $f_b$, across varying mixture contexts. We evaluate this across 200 mixture contexts per operation, constructed to avoid latent-frequency collisions between filler components and the source or target component (Appendix~\ref{app:app_intervention_details}). 

In the Gabor-refactorized space, relevant bins are identified analytically from $f_a$ and $f_b$ via the injectivity folding map $\varphi$, and a signed energy change at each bin is read directly from the decomposition. This is meaningful for two reasons. First, the Gabor basis is interpretable by construction: each axis corresponds to a specific narrowband frequency region, so interventions have analytically predictable effects on signal components, subject to the injectivity constraints of Section~\ref{sec:injectivity}, which bound when components remain distinguishable. Second, manipulations in the Gabor basis propagate to the decoded audio via $M$: since $M$ reconstructs the original latent with cosine similarity $\geq 0.97$, the decoder receives a latent very close to those within its trained distribution, and the decoded signal reflects the intended spectral change. Together, these properties mean that GLRF enables structured control over audible full-signal attributes through the existing decoder, without retraining or out-of-distribution perturbation. This amounts to more utilization of signal structure within the existing architecture's parameter space.

In the original latent, no equivalent control procedure exists. An analysis of DAC, for example, reveals that 825 of 1024 channels are broadband and multimodal, with mean main-lobe bandwidth 8.30 Hz (35$\times$ the receptive field bound). This makes it unlikely to identify a small set of channels that reliably tracks a single pitch component independently of its neighbors. Therefore, including all channels in a similarity measure artificially inflates the score since broadband channels activate consistently regardless of which component is present; more sophisticated baselines such as sparse regression face the same problem, as any linear combination remains contaminated by broadband responses. As a baseline, we identify the top-$k$ channels ($k=5$) most responsive to $f_a$ and $f_b$ in isolation and measure whether their energy moves in the correct direction under substitution. A trial is correct if source-associated channels decrease and target-associated channels increase; full details in Appendix~\ref{app:app_intervention_details}.

\subsection{Results}
\textbf{Result 1: Improved Separability.} After re-factorization, filter bandwidths approach the receptive-field bound across all models (Table~\ref{tab:gabor_results}), reducing the gap from $10 - 35\times$ in the original representations to approximately $1.5 - 3\times$. Reconstruction fidelity is preserved throughout, with low log-mel error and high latent cosine similarity, indicating that the learned mapping enables the model to accurately decode the original signal. That a fixed linear mapping reconstructs the original latent with cosine similarity $\geq 0.97$ also implies the encoder's representation is approximately linearly organized with respect to frequency, but not aligned to a basis that supports independent access. GLRF makes this organization explicit. 

To confirm that detected proxies correspond to signal components rather than filterbank atom centers, we also examine the distance between the latent spectral peaks and the nearest Gabor atom center. If GLRF were simply recovering its own discretization grid, peaks would cluster near atom centers. Instead, 91.9\% of peaks fall more than 1 Hz from the nearest atom center, confirming that the Gabor decomposition is tracking signal-determined components, not filterbank artifacts.

Additional results, including interpolation, are in Appendix~\ref{app:app_intervention_details}.

\begin{table}[t]
\centering
\small
\begin{tabular}{lccccc}
\toprule
Model & Log-mel $\downarrow$ & Latent Cosine $\uparrow$ & Min BW (Hz) $\downarrow$ & Min BW / RF $\downarrow$ & Valid (\%) $\uparrow$ \\
\midrule
EnCodec      & 0.62 & 0.98 & 6.93 & 1.62$\times$ & 48\% \\
DAC          & 0.37 & 0.98  & 4.64 & 2.21$\times$ & 95\% \\
Stable Audio & 0.55 & 0.97  & 2.12 & 1.48$\times$ & 94\% \\
\bottomrule
\end{tabular}
\caption{Reconstruction and separability after Gabor refactorization. 
Log-mel measures perceptual reconstruction error; latent cosine measures 
fidelity of the linear mapping to the original latent. Min BW is the 
narrowest recovered filter bandwidth; Min BW/RF reports its ratio to the 
theoretical receptive-field resolution limit. Valid (\%) is the fraction 
of filters satisfying the smoothness criterion. Note that after refactorization, the relevant resolution bound is determined by the Gabor kernel size $K_\text{Gabor}$ rather than the original encoder receptive field, since GLRF extends the encoder with an additional fixed filterbank layer. This is the finest resolution the combined encoder-plus-Gabor system can achieve.}
\label{tab:gabor_results}
\end{table}

\textbf{Result 2: Improved controllability.} We focus on DAC as the strongest case for observing fine-grained frequency control: it exhibits the lowest injectivity failure under typical acoustic signals, the highest proportion of valid frequency-selective filters after refactorization (95\%), and the finest theoretical resolution limit relative to local $f_s$, making it the model where GLRF's bins are potentially most selective and targeted manipulation most interpretable.

\begin{figure}[t]
\centering
\includegraphics[width=\linewidth]{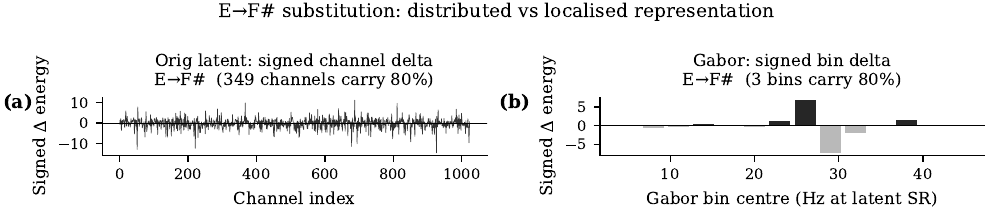}
\caption{Signed energy delta across latent channels (left) and Gabor bins (right) 
for a representative frequency substitution. In the Gabor-refactorized space, 80\% 
of the delta concentrates in 3 bins aligned with the source and target frequencies; 
in the original latent it is distributed across 349 channels with no identifiable 
structure. This localization is what enables reliable targeted manipulation in the 
Gabor space but not in the original latent.}
\label{fig:signed_delta}
\end{figure}

Figure~\ref{fig:signed_delta} illustrates a representative substitution in both spaces. In the Gabor space the delta is concentrated in 3 bins aligned with the source and target frequencies; in the original latent it is distributed across 349 channels with no identifiable structure.

We observe control through two mechanisms. First, latent interpolation between synthetic mixture contexts differing by a single frequency produces clean, crossing trajectories in decoded audio band energy across 300 mixture contexts ($std =0.02\text{--}0.05$, vs. $0.240\text{--}0.270$ in the original latent space), confirming that Gabor-space manipulation has predictable audible consequences in controlled settings. Second, latent interpolation between real acoustic string recordings (NSynth, $n = 1{,}037$ pitch pairs across 48 pitches) produces correct F0 transfer direction in $97.3\%$ of pairs, with correct direction rate increasing from $96\%$ for small intervals (3--5 semitones) to $99\%$ for large intervals (>2 octaves). This is consistent with the separability analysis in Section~\ref{sec:separability-empirical}: larger pitch intervals are spaced by more Gabor bins and are more reliably distinguished in the refactorized space. Linear interpolation additionally produces perceptually consistent pitch transitions across all three models (Appendix~\ref{app:app_intervention_details}), confirming that structured representational geometry is not specific to DAC. 

\textit{Does GLRF enable targeted substitution?} Unlike interpolation-based methods, targeted substitution in the Gabor-refactorized space does not require a paired target endpoint; we can access individual components directly, independently of mixture context. In the Gabor-refactorized space for DAC, substitution succeeds in 200/200 contexts ($100\%$), with the delta vector concentrating $72\%$ of its energy in 5 of 41 bins; the two dominant bins correspond precisely to the source frequency being removed and the target being added. In the original latent, the analogous procedure succeeds in just 60/200 contexts ($30\%$, $95\% CI [0.24, 0.37]$), below the $50\%$ chance baseline. The below-chance rate reflects active interference from context contamination, which we measure directly in Appendix~\ref{app:app_intervention_details}. This confirms that the most frequency-responsive channels are not stably tuned, with their response to a given frequency shifting depending on what other components are present, making them unreliable handles for targeted manipulation. 

\textit{Is the substitution direction stable or context-dependent?} Analogous substitution directions are approximately consistent across mixture contexts in the Gabor space (mean cosine similarity $0.88 \pm 0.09$, $n=200$), confirming they correspond to global directions rather than context-specific perturbations. Consistency is substantially lower in the original latent (mean cosine similarity $0.61 \pm 0.24$), with higher variance confirming that the delta vector shifts significantly depending on mixture context.

\textit{Is the substitution direction frequency-specific or generic?} The Gabor space discriminates between the same operation at different transpositions (intra-cross $=+0.98$), while the original latent largely does not ($+0.31$), confirming that Gabor delta vectors are frequency-specific while latent delta vectors are generally not; the latter tend to capture a global response to change rather than a response to a particular frequency component. 

These results show that encoders preserve linearly separable frequency components 
in entangled forms but largely do not expose them, and that Gabor-space manipulation 
produces predictable audible consequences across synthetic signals, real instrument 
signals, and multiple mixture signals.

\section{Conclusion}
We show that several state-of-the-art strided convolutional encoders impose two structural bottlenecks on access to frequency-localized signal primitives: injectivity failure from downsampling-induced alias collapse, and separability failure from insufficient frequency resolution. Both are systematic consequences of architectural design and signal structure, making failure modes analytically predictable before training. These results suggest that stride schedule is a design decision with measurable representational consequences computable in advance: effective latent sampling rates with low factor complexity (e.g., 22.05 or 44.1 kHz, whose factor structures produce few alias collisions for typical harmonic signals) substantially reduce collapse compared to highly composite rates (e.g., 16, 24, or 48 kHz). Gabor Latent Refactorization demonstrates that recovering frequency structure post-hoc is cheap and effective. 

Future work includes formally developing GLRF into a general-purpose method evaluated across diverse signal domains (e.g., speech, environmental audio, transient-rich content), exploring architectural modifications that jointly address both bottlenecks at design time, and investigating safety applications that utilize the above findings.

To the last point, one primary risk, which is present whenever generative audio improves, is that finer control over perceptual attributes could enhance malicious use cases, like voice cloning or deepfakes. Potential mitigations include signal-dependent watermarking via frequency-localized channels and detection tools grounded in the analytical predictability of the failure modes identified here.

\textbf{Limitations}
Our analysis assumes signals are locally well-approximated as sums of narrowband components. For signals with high component density relative to the effective latent sampling rate or broadband signals, alias classes may grow larger and collapse rates may increase. This may push the model into a regime where the injectivity bound predicts degraded access even before separability is considered. Selectivity works best when the model's receptive field allows the Gabor layer to be selective enough such that most signal component proxies land in different Gabor channels, which makes the results signal-dependent and model-dependent. Most experiments use controlled synthetic signals, since measuring collapse requires ground truth frequencies; empirical rates on real audio may differ, though theoretical bounds should hold regardless. We do not analyze the interaction between injectivity and separability failures, nor test models with explicitly frequency-selective front ends (e.g., LEAF, SincNet). The refactorization addresses separability but not injectivity. Controllability is demonstrated through trajectory-based manipulation and is most consistent in DAC. GLRF is a proof of concept and does not exhaust the space of possible interventions. The linear mapping is fit on simple signals; generalization to broadband or transient-rich content (e.g., percussion, environmental sounds) is not quantitatively assessed and may require mixed training data. 

\bibliographystyle{plainnat}
\bibliography{references}

@article{defossez2022high,
  title={High fidelity neural audio compression},
  author={D{\'e}fossez, Alexandre and Copet, Jade and Synnaeve, Gabriel and Adi, Yossi},
  journal={arXiv preprint arXiv:2210.13438},
  year={2022}
}

@article{kumar2023high,
  title={High-fidelity audio compression with improved rvqgan},
  author={Kumar, Rithesh and Seetharaman, Prem and Luebs, Alejandro and Kumar, Ishaan and Kumar, Kundan},
  journal={Advances in Neural Information Processing Systems},
  volume={36},
  pages={27980--27993},
  year={2023}
}

@inproceedings{evans2025stable,
  title={Stable audio open},
  author={Evans, Zach and Parker, Julian D and Carr, CJ and Zukowski, Zack and Taylor, Josiah and Pons, Jordi},
  booktitle={ICASSP 2025-2025 IEEE International Conference on Acoustics, Speech and Signal Processing (ICASSP)},
  pages={1--5},
  year={2025},
  organization={IEEE}
}

@article{zeghidour2021leaf,
  title={LEAF: A learnable frontend for audio classification},
  author={Zeghidour, Neil and Teboul, Olivier and Quitry, F{\'e}lix De Chaumont and Tagliasacchi, Marco},
  journal={arXiv preprint arXiv:2101.08596},
  year={2021}
}

@article{dhariwal2020jukebox,
  title={Jukebox: A generative model for music},
  author={Dhariwal, Prafulla and Jun, Heewoo and Payne, Christine and Kim, Jong Wook and Radford, Alec and Sutskever, Ilya},
  journal={arXiv preprint arXiv:2005.00341},
  year={2020}
}

@article{zeghidour2021soundstream,
  title={Soundstream: An end-to-end neural audio codec},
  author={Zeghidour, Neil and Luebs, Alejandro and Omran, Ahmed and Skoglund, Jan and Tagliasacchi, Marco},
  journal={IEEE/ACM Transactions on Audio, Speech, and Language Processing},
  volume={30},
  pages={495--507},
  year={2021},
  publisher={IEEE}
}

@article{ravanelli2018interpretable,
  title={Interpretable convolutional filters with sincnet},
  author={Ravanelli, Mirco and Bengio, Yoshua},
  journal={arXiv preprint arXiv:1811.09725},
  year={2018}
}

@article{bruna2013invariant,
  title={Invariant scattering convolution networks},
  author={Bruna, Joan and Mallat, St{\'e}phane},
  journal={IEEE transactions on pattern analysis and machine intelligence},
  volume={35},
  number={8},
  pages={1872--1886},
  year={2013},
  publisher={IEEE}
}

@article{anden2014deep,
  title={Deep scattering spectrum},
  author={And{\'e}n, Joakim and Mallat, St{\'e}phane},
  journal={IEEE Transactions on Signal Processing},
  volume={62},
  number={16},
  pages={4114--4128},
  year={2014},
  publisher={IEEE}
}

@inproceedings{zhang2019making,
  title={Making convolutional networks shift-invariant again},
  author={Zhang, Richard},
  booktitle={International conference on machine learning},
  pages={7324--7334},
  year={2019},
  organization={PMLR}
}

@book{bregman1994auditory,
  title={Auditory scene analysis: The perceptual organization of sound},
  author={Bregman, Albert S},
  year={1994},
  publisher={MIT press}
}

@book{morse1986theoretical,
  title={Theoretical acoustics},
  author={Morse, Philip McCord and Ingard, K Uno},
  year={1986},
  publisher={Princeton university press}
}

@book{oppenheim99,
  author = {Oppenheim, Alan V. and Schafer, Ronald W. and Buck, John R.},
  edition = {Second},
  publisher = {Prentice-hall Englewood Cliffs},
  title = {Discrete-Time Signal Processing},
  year = {1999}
}

@article{singh2025discovering,
  title={Discovering Interpretable Concepts in Large Generative Music Models},
  author={Singh, Nikhil and Cherep, Manuel and Maes, Pattie},
  journal={arXiv preprint arXiv:2505.18186},
  pages={arXiv--2505},
  year={2025}
}

@inproceedings{evans2024fast,
  title={Fast timing-conditioned latent audio diffusion},
  author={Evans, Zach and Carr, CJ and Taylor, Josiah and Hawley, Scott H and Pons, Jordi},
  booktitle={Forty-first International Conference on Machine Learning},
  year={2024}
}

@article{beguvs2022interpreting,
  title={Interpreting intermediate convolutional layers of generative CNNs trained on waveforms},
  author={Begu{\v{s}}, Ga{\v{s}}per and Zhou, Alan},
  journal={IEEE/ACM transactions on audio, speech, and language processing},
  volume={30},
  pages={3214--3229},
  year={2022},
  publisher={IEEE}
}

@article{lee2017raw,
  title={Raw waveform-based audio classification using sample-level CNN architectures},
  author={Lee, Jongpil and Kim, Taejun and Park, Jiyoung and Nam, Juhan},
  journal={arXiv preprint arXiv:1712.00866},
  year={2017}
}

@article{vu2024toward,
  title={Toward end-to-end interpretable convolutional neural networks for waveform signals},
  author={Vu, Linh and Tran, Thu and Lim, Wern-Han and Phan, Raphael},
  journal={arXiv preprint arXiv:2405.01815},
  year={2024}
}

@inproceedings{muckenhirn2019understanding,
  title={Understanding and Visualizing Raw Waveform-Based CNNs.},
  author={Muckenhirn, Hannah and Abrol, Vinayak and Magimai-Doss, Mathew and Marcel, S{\'e}bastien},
  booktitle={Interspeech},
  pages={2345--2349},
  year={2019}
}

@article{choi2016automatic,
  title={Automatic tagging using deep convolutional neural networks},
  author={Choi, Keunwoo and Fazekas, George and Sandler, Mark},
  journal={arXiv preprint arXiv:1606.00298},
  year={2016}
}

@article{kong2020panns,
  title={Panns: Large-scale pretrained audio neural networks for audio pattern recognition},
  author={Kong, Qiuqiang and Cao, Yin and Iqbal, Turab and Wang, Yuxuan and Wang, Wenwu and Plumbley, Mark D},
  journal={IEEE/ACM Transactions on Audio, Speech, and Language Processing},
  volume={28},
  pages={2880--2894},
  year={2020},
  publisher={IEEE}
}

@inproceedings{choi2017convolutional,
  title={Convolutional recurrent neural networks for music classification},
  author={Choi, Keunwoo and Fazekas, Gy{\"o}rgy and Sandler, Mark and Cho, Kyunghyun},
  booktitle={2017 IEEE International conference on acoustics, speech and signal processing (ICASSP)},
  pages={2392--2396},
  year={2017},
  organization={IEEE}
}

@article{team2026attention,
  title={Attention residuals},
  author={Chen, Guangyu and Zhang, Yu and Su, Jianlin and Xu, Weixin and Pan, Siyuan and Wang, Yaoyu and Wang, Yucheng and Chen, Guanduo and Yin, Bohong and others},
  journal={arXiv preprint arXiv:2603.15031},
  year={2026}
}

@inproceedings{yuan2025native,
  title={Native sparse attention: Hardware-aligned and natively trainable sparse attention},
  author={Yuan, Jingyang and Gao, Huazuo and Dai, Damai and Luo, Junyu and Zhao, Liang and Zhang, Zhengyan and Xie, Zhenda and Wei, Yuxing and Wang, Lean and Xiao, Zhiping and others},
  booktitle={Proceedings of the 63rd Annual Meeting of the Association for Computational Linguistics (Volume 1: Long Papers)},
  pages={23078--23097},
  year={2025}
}

@article{elhage2022toy,
  title={Toy models of superposition},
  author={Elhage, Nelson and Hume, Tristan and Olsson, Catherine and Schiefer, Nicholas and Henighan, Tom and Kravec, Shauna and Hatfield-Dodds, Zac and Lasenby, Robert and Drain, Dawn and Chen, Carol and others},
  journal={arXiv preprint arXiv:2209.10652},
  year={2022}
}

@article{bricken2023monosemanticity,
   title={Towards Monosemanticity: Decomposing Language Models With Dictionary Learning},
   author={Bricken, Trenton and Templeton, Adly and Batson, Joshua and Chen, Brian and Jermyn, Adam and Conerly, Tom and Turner, Nick and Anil, Cem and Denison, Carson and Askell, Amanda and Lasenby, Robert and Wu, Yifan and Kravec, Shauna and Schiefer, Nicholas and Maxwell, Tim and Joseph, Nicholas and Hatfield-Dodds, Zac and Tamkin, Alex and Nguyen, Karina and McLean, Brayden and Burke, Josiah E and Hume, Tristan and Carter, Shan and Henighan, Tom and Olah, Christopher},
   year={2023},
   journal={Transformer Circuits Thread},
   url={https://transformer-circuits.pub/2023/monosemantic-features/index.html}
}

@article{caillon2021rave,
  title={RAVE: A variational autoencoder for fast and high-quality neural audio synthesis},
  author={Caillon, Antoine and Esling, Philippe},
  journal={arXiv preprint arXiv:2111.05011},
  year={2021}
}

@inproceedings{engel2017neural,
  title={Neural audio synthesis of musical notes with {WaveNet} autoencoders},
  author={Engel, Jesse and Resnick, Cinjon and Roberts, Adam and Dieleman, 
          Sander and Norouzi, Mohammad and Eck, Douglas and Simonyan, Karen},
  booktitle={International Conference on Machine Learning},
  year={2017}
}

@misc{soxr,
  title={{libsoxr}: The {SoX} Resampler library},
  author={Robinson, Rob},
  url={https://sourceforge.net/projects/soxr/},
  year={2013}
}

\appendix
\section{Technical appendices and supplementary material}
\subsection{Proof of Proposition 1}
\label{app:app_proof_of_prop1}

\paragraph{Setup.}
Let \(x(t)\) be a signal composed of \(n\) approximately narrowband components
\[
x(t) = \sum_{i=1}^n c_i(t),
\]
where each component \(c_i\) has dominant center frequency \(f_i\). Let \(E\) be an encoder containing strided downsampling operations with cumulative downsampling factor \(s\), yielding an effective latent sampling rate \(f_s\). Under Assumption~\ref{ass:narrowband}, each component is sufficiently narrowband that its dominant frequency determines its behavior under downsampling. 

\paragraph{Encoder-induced equivalence classes.}
The downsampling operations induce an equivalence relation \(\sim_E\) over component frequencies, where
\[
f_i \sim_E f_j
\]
if components centered at \(f_i\) and \(f_j\) become indistinguishable after the encoder's downsampling operations. Equivalently, \(f_i\) and \(f_j\) belong to the same encoder-induced alias class.

This relation is reflexive, symmetric, and transitive by construction: every component is indistinguishable from itself; indistinguishability is symmetric; and if two components are each indistinguishable from a third under the same encoder mapping, they are indistinguishable from each other. Therefore, \(\sim_E\) partitions the frequency domain into equivalence classes.

\paragraph{Orbit structure under commensurability.}
Consider a component with frequency \(f\), and let the effective sampling rate be \(f_s\). Repeated application of the aliasing operation induced by downsampling generates a sequence of frequencies
\[
f, \; f \pm f_s, \; f \pm 2f_s, \; \dots
\]
modulo the representable frequency range.

If the ratio \(f / f_s\) is rational, i.e., \(f / f_s = p/r\) in lowest terms, then this sequence is periodic with period \(r\). Thus, the component lies in a finite orbit of size at most \(r\), and only a bounded number of distinct frequencies can be distinguished within this class.

In contrast, if \(f / f_s\) is irrational, the sequence does not repeat, and the orbit does not close, so components remain distinct under repeated aliasing operations.

This explains why commensurate (e.g., harmonic) signals yield a small number of equivalence classes, while incommensurate signals yield a larger number.

\paragraph{Counting bound.}
Let \(q\) be the number of distinct equivalence classes intersecting the set of component frequencies \(\{f_i\}_{i=1}^n\). Since all components within the same class are indistinguishable after encoding, the encoder can assign at most one distinguishable latent proxy to each class. Therefore, the number of distinguishable component proxies is upper bounded by \(q\).

If \(n > q\), then by the pigeonhole principle, at least two components must lie in the same equivalence class. These components cannot be distinguished in the latent representation, so the encoder is not injective with respect to the component set.

Thus, at most \(q\) components can be injectively represented, proving Proposition 1.

\paragraph{Relation to classical aliasing.}
In the special case of single-stage uniform downsampling, \(\sim_E\) reduces to the standard aliasing relation induced by the reduced sampling rate. The formulation above generalizes this view by treating the equivalence classes induced by stacked downsampling as global objects over a multi-component signal, which allows us to count how many components can remain distinguishable.

\subsection{Injectivity Analysis}
\label{app:app_injectivity_analysis}
\begin{figure}[t]
    \centering
    \includegraphics[width=\linewidth]{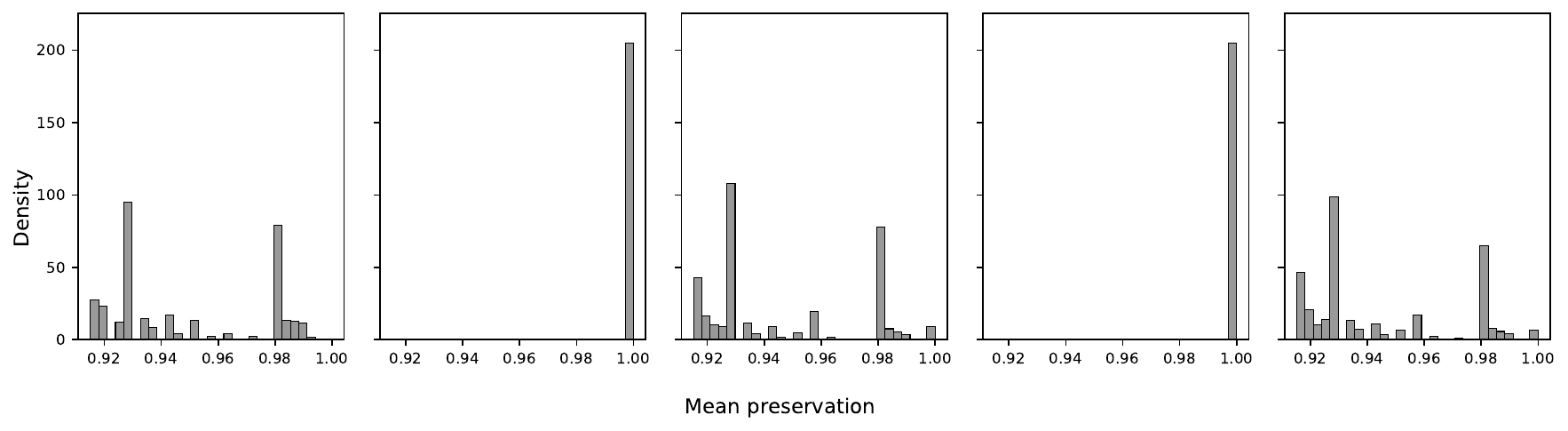}
    \caption{Mean preservation distributions (density) across stride schedules. High factor-complexity rates (16, 24, 48 kHz) exhibit bimodal collapse, while low-complexity rates (22.05, 44.1 kHz) concentrate near perfect preservation.}
    \label{fig:Ideal_sr}
\end{figure}

\textit{Signal construction.} Experiments use five signal configuration types designed to span the harmonic and inharmonic regimes relevant to structured audio.
\begin{itemize}
    \item Harmonic commensurate signals are generated exhaustively: for each $(p,q)$ with $q \in {1,2,3,5,10,15,20}$ and $1 \le p < q$, a fundamental $f_0=\frac{p}{q}$. $f_s$ is generated and a harmonic stack of $n \in {3,5,8,12,15,18,20}$ components built as $f_k=f_0k$. This exhaustively covers configurations where the fundamental is a rational multiple of $f_s$ with small denominator.
    \item Harmonic random uses $f_0 \sim \text{Uniform}[50, 1000]$ Hz, $n\sim \text{Uniform}\{3,\ldots,10\}$, components at $f_k=f_0k$. 
    \item Near-harmonic uses $f_0=f_s(\frac{p}{q}+ \delta)$, $q \in \{2,3,5,10,20\}$, with $p$ drawn uniformly from $\{1,\ldots,q-1\}$ and $\delta \sim \text{Uniform} [-10^{-3}, 10^{-3}]$, constrained to $f_0 \in [50, 1000]$ Hz, with $n \sim \text{Uniform}\{3,\ldots,12\}$.
    \item Inharmonic uses the standard stretched harmonic series $f_0 k \sqrt{1 + Bk^2}$ where $B \sim \text{Uniform}[10^{-4}, 5\times10^{-3}]$ and $f_0 \sim \text{Uniform}[50, 1000]$ Hz, with $n \sim \text{Uniform}\{3,\ldots,10\}$.
    \item Mixture combines 2–3 sub-stacks of randomly selected types with total $n \sim \text{Uniform}\{6,\ldots,20\}$ distributed by multinomial sampling.
\end{itemize}
The harmonic commensurate class is generated exhaustively to ensure complete coverage of all orbit sizes. All signals use base sampling rate 24 kHz, duration 3 s, amplitude normalized by $n$. Alias computation uses tolerance 0.0 (exact aliasing), making collapse rate estimates conservative.

\textit{Activation extraction.} Signals are passed through the full encoder and activations extracted at the end-of-encoder layer. The magnitude spectrum is computed via FFT per latent channel with DC removal. The effective latent sampling rate is tracked through the encoder by accumulating stride factors layer by layer.

\textit{Alias prediction.} For each input frequency $f_i$, the predicted alias is $\phi(f_i) = f_i \bmod f_s$, where $f_s$ is the effective latent sampling rate. Frequencies within tolerance $\tau_\text{theory} = \max(10^{-6},\ \delta f / 2)$ of each other are treated as coincident, where $\delta f$ is the FFT frequency resolution ($\delta f$ is a fraction of 1 Hz for all models). The number of distinct predicted aliases $q$ is then computed from the resulting equivalence classes, and collapse rate is defined as $\frac{n-q}{n}$.

\textit{Mode recovery and energy concentration.} A predicted alias frequency is considered recovered if spectral energy within $2\delta f$ of that frequency exceeds 20\% of the maximum alias energy ($\alpha = 0.2$). $\delta f$ is a fraction of 1 Hz for all models. Mode recovery is the fraction of predicted aliases meeting this criterion. Energy concentration is the fraction of total spectral energy falling within $2\delta f$ of any predicted alias frequency.

\textit{Reported statistics.} Mode recovery and energy concentration are reported as mean $\pm$ SD across signal configurations within each model. Collapse rate is a deterministic function of input frequencies and effective sampling rate, reported as a point estimate per model.

\textit{Stability.} Results are stable across detection tolerances from $0.5\delta f$ to $4\delta f$ and energy thresholds $\alpha \in [0.1, 0.3]$. 

\subsection{Separability Simulation}
\label{app:sep_sim}

\paragraph{Overview}
We simulate the effective filter bandwidth of each encoder's convolutional stack using only kernel sizes, dilations, and stride schedules, with no access to learned weights. The simulation tracks how frequency selectivity accumulates through residual units and is re-expressed at each downsampling stage. Predicted bandwidths are compared against empirically measured best-case filter bandwidths from the sweep described in Appendix~\ref{app:app_separability_analysis}.

\paragraph{Filter Model} 
Given an encoder with $L$ stages, let layer $\ell$ have kernel size $k_\ell$, dilation $d_\ell$, and stride $s_\ell$. Pointwise layers ($k_\ell=1$) are flat in frequency and act as identity $h_\ell = \delta$. However, each selectivity-carrying layer ($k_\ell > 1$) is assigned a best-case impulse response: a Gaussian sampled at the $k_\ell$ dilated tap positions,
\begin{equation}
h_\ell[n] = \exp\!\left(-\frac{(n - \mu_\ell)^2}{2\sigma_\ell^2}\right), \quad n \in \{0, d_\ell, 2d_\ell, \dots, (k_\ell{-}1)d_\ell\},
\end{equation}

with extent $E_\ell = (k_\ell{-}1)d_\ell + 1$, center $\mu_\ell = (E_\ell{-}1)/2$, and width $\sigma_\ell = E_\ell/6$ (the $\pm 3\sigma$ convention); $h_\ell$ is zero at non-tap positions and $\ell_2$-normalized. The cumulative response is built by convolving through the stack, decimating by the stride at each strided stage:

\begin{equation}
g_0 = h_0, \qquad g_\ell = \big(g_{\ell-1} * h_\ell\big)\!\downarrow_{s_\ell},
\end{equation}

where $\downarrow_{s}$ denotes keeping every $s$-th sample. The local sampling rate is tracked alongside, $f_\ell = f_{\ell-1}/s_\ell$, so that decimation rescales the frequency grid without altering the accumulated main-lobe shape. The predicted best-case bandwidth is the $-3$,dB main-lobe width of the final response $g_L$, measured against the final local rate $f_L$:

\begin{equation}
\hat{B} = \text{BW}_{-3\text{dB}}\big(|\mathcal{F}\{g_L\}|, \, f_L\big),
\end{equation} where $\text{BW}_{-3\text{dB}}$ measures bandwidth at the half-power point of the dominant peak. 

\paragraph{Nonlinearities} Snake is treated as bandwidth-preserving: Table~\ref{tab:snake_validation} shows main-lobe bandwidth ratios of $1.000$ between Snake and linear stacks at every encoder stage. ELU introduces a small rectification broadening; we omit it to report the purely geometric prediction. EnCodec is the only architecture of the three to use ELU, and including a broadening term required post hoc fitting but only slightly improved the estimate. We also note that EnCodec's LSTM acts on the latent sequence near the end of the convolutional stack and is not modeled. 

\begin{figure}[t]
    \centering
    \includegraphics[width=\linewidth]{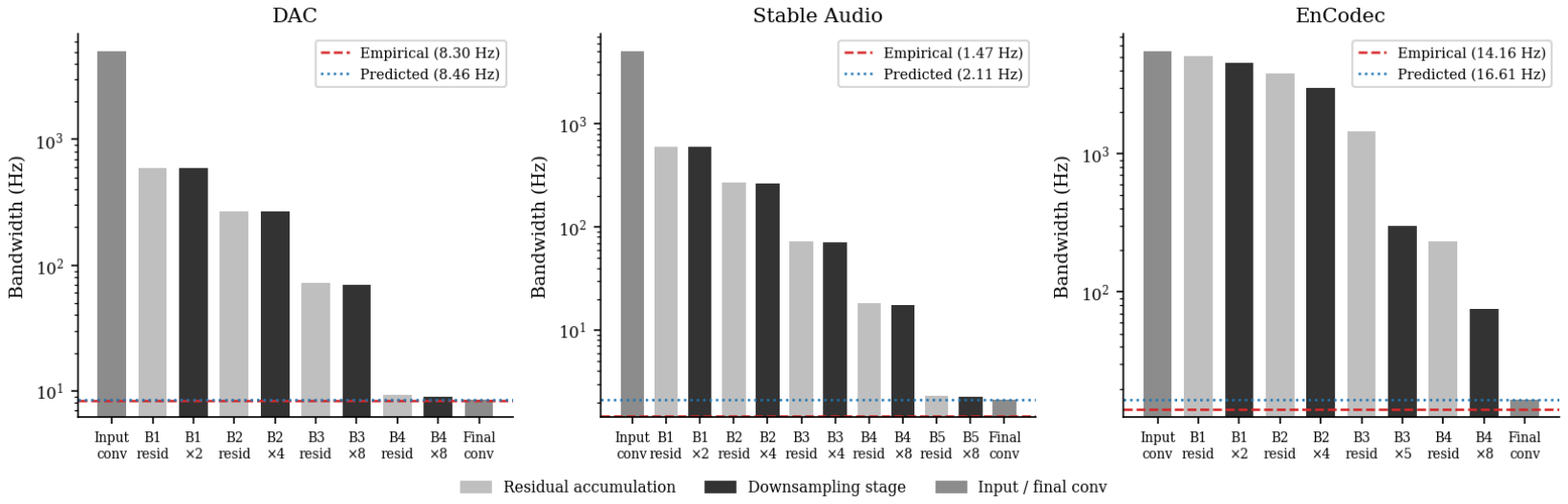}
    \caption{Evolution of filter bandwidth throughout simulated encoder across models.}
    \label{fig:encoder_bw_evo}
\end{figure}

\begin{table}[t]
\centering
\caption{Bandwidth at each encoder stage for Snake vs.\ linear stacks on DAC's architecture. Ratios of 1.000 throughout validate treating Snake as not inhibiting frequency accumulation. Results for Stable Audio are identical and omitted for brevity.}
\label{tab:snake_validation}

\begin{tabular}{lrrr}
\toprule
Stage & Snake BW (Hz) & Linear BW (Hz) & Ratio \\
\midrule
Input conv   & 5512.500 & 5512.500 & 1.000 \\
B1 residuals &  602.930 &  602.930 & 1.000 \\
B1 $\times$2 &  602.930 &  602.930 & 1.000 \\
B2 residuals &  301.465 &  301.465 & 1.000 \\
B2 $\times$4 &  279.932 &  279.932 & 1.000 \\
B3 residuals &   75.366 &   75.366 & 1.000 \\
B3 $\times$8 &   75.366 &   75.366 & 1.000 \\
B4 residuals &    9.421 &    9.421 & 1.000 \\
B4 $\times$8 &    9.421 &    9.421 & 1.000 \\
Final conv   &    9.421 &    9.421 & 1.000 \\
\bottomrule
\end{tabular}
\end{table}

\paragraph{Reproducibility}
The simulation requires only \texttt{numpy} and \texttt{scipy} and runs in under one minute on a standard CPU. No GPU or pretrained model weights are needed. Code is included in the supplementary material. Architecture parameters used are listed in Table~\ref{tab:arch_params}.

\begin{table}[t]
\centering
\caption{Architecture parameters used in the simulation. Residual unit dilations are listed as tuples. EnCodec residual units use no dilation beyond $d=1$.}
\label{tab:arch_params}
\begin{tabular}{llllll}
\toprule
Model & Input SR & Strides & Stride kernels & Res.\ unit $(k, d)$ & Act. \\
\midrule
DAC          & 44100 & 2,4,8,8   & 4,8,16,16  & $(7,1),(7,3),(7,9)$ & Snake \\
Stable Audio & 44100 & 2,4,4,8,8 & 4,8,8,16,16 & $(7,1),(7,3),(7,9)$ & Snake \\
EnCodec      & 48000 & 2,4,5,8   & 4,8,10,16  & $(3,1),(1,1)$       & ELU   \\
\bottomrule
\end{tabular}
\end{table}

\subsection{Separability analysis}
\label{app:app_separability_analysis}
\textit{Filter sweep.} Frequency selectivity is measured by sweeping single-frequency sinusoidal inputs through each encoder and measuring the response of each latent channel. These inputs were chosen as a reliable baseline for selectivity in the most pure signal case. For each model, 300 frequencies are swept linearly from 1 Hz to the Nyquist frequency of the effective latent sampling rate $\frac{f_s}{2}$. Input signals are generated at the model's native sampling rate with duration 2 seconds. Responses are measured as mean squared activation per channel, producing a frequency response curve per channel.

\textit{Bandwidth computation.} Frequency responses are folded to the local frequency axis via $f\bmod f_s$ with reflection at $\frac{f_s}{2}$ to account for aliasing. Each folded response curve is lightly smoothed with a Gaussian filter ($\sigma = 2$ bins) to suppress noise before peak detection. Bandwidth is measured at the half-power point (-3 dB, `frac=0.5`) of the dominant peak, defined as the tallest peak with prominence exceeding 30\% of the maximum response and width of at least 3 bins. Edge bins (3 on each side) are excluded from peak detection to avoid boundary artifacts. Bandwidth is reported in Hz normalized by $f_s$. 

\textit{Filter validity criterion.} A filter is considered valid (and included in the Best BW statistic) if its folded response has exactly one dominant peak with peak dominance exceeding 0.7, where dominance is defined as the height of the tallest peak divided by the sum of all peak heights. This criterion selects filters with unimodal, frequency-selective responses and excludes broadband or multimodal filters that do not admit a meaningful bandwidth measurement. The Valid (\%) column in Table~\ref{tab:separability} reports the fraction of filters meeting this criterion per model.

\textit{Resolution bounds.} The single-kernel bound is $\Delta f = \frac{f_s}{K}$, where $K$ is the kernel size of the final convolutional layer. The receptive field bound is $\Delta f = \frac{f_s}{R}$, where $R$ is the cumulative receptive field computed as $R = 1 + \sum_{\ell=1}^{L}(k_\ell - 1)\prod_{j=1}^{\ell-1} s_j$, with $k_\ell$ the kernel size and $s_j$ the stride at layer $j$ (Eq.~\ref{eq:rf-bound}). Best BW is the minimum bandwidth observed among valid filters. and Best/RF report the ratios of Best BW to the single-kernel and receptive field bounds respectively.

\textit{Reported statistics.} Best BW, Best/RF, and Valid (\%) are deterministic given fixed model weights and sweep parameters, reported as point estimates per model. 

\subsection{Intervention details}
\label{app:app_intervention_details}
\textit{Gabor filterbank.} A fixed Gabor filterbank is applied to the encoder latents $z \in \mathbb{R}^{C \times T}$. The filterbank consists of complex Gabor filters with center frequencies spaced linearly. Each filter uses a Hann-windowed sinusoid with kernel size $K$ samples, producing real and imaginary components. Filterbank center frequencies, spacing, and kernel size are tuned per model to match the effective latent sampling rate of each encoder; parameters for each model are listed in Table~\ref{tab:gabor_params}. These were chosen with respect to the encoder's full receptive field and its given selectivity bound, $\Delta f = \frac{f_s}{R}$ ~\ref{eq:rf-bound}. Center frequencies are normalized to $[0, 0.5]$ by dividing by $f_s$. Filters are $\ell_2$ normalized before application. The filterbank output $\tilde{z} \in \mathbb{R}^{C \times 2F \times T}$ concatenates real and imaginary responses, where $F$ is the number of filter frequencies.

\begin{table}[t]
\centering
\small
\begin{tabular}{lcccc}
\toprule
Model & $f_s$ (Hz) & Freq. Range (Hz) & Spacing (Hz) & Kernel Size \\
\midrule
EnCodec      & 150   & 1–75 & 5 & 35 \\
DAC          & 86.13 & 1–45 & 3 & 41 \\
Stable Audio & 21.53 & 1–11 & 2 & 15 \\
\bottomrule
\end{tabular}
\caption{Parameters for Gabor intervention across models.}
\label{tab:gabor_params}
\end{table}

\textit{Latent normalization.} Prior to filterbank application, latents are normalized per-channel: $z \leftarrow (z - \mu) / (\sigma + 10^{-6})$, where $\mu$ and $\sigma$ are the temporal mean and standard deviation of each channel. This ensures the linear mapping operates on a consistent scale across models.

\textit{Linear mapping.} A channel-wise linear mapping $\in \mathbb{R}^{C \times 2F}$ (Eq. ~\ref{eq:mapping}) is learned to reconstruct the original latent from the Gabor decomposition, minimizing $\| z - M\tilde{z} \|^2$ (Eq. ~\ref{eq:recon-loss}) per channel via ridge regression:
\begin{equation}
w_c = (X_c^\top X_c + \lambda I)^{-1} X_c^\top z_c
\label{eq:ridge}
\end{equation}
where $X_c \in \mathbb{R}^{NT \times 2F}$ is the Gabor decomposition for channel $c$ stacked across all $N$ training signals, $z_c \in \mathbb{R}^{NT}$ is the corresponding concatenated latent activations, and $\lambda = 10^{-4}$ is the ridge penalty. This is a closed-form solution requiring no iterative optimization.

\textit{Training signals.} The mapping is fit on 1000 harmonic signals with fundamental frequencies spaced linearly from 100 to 500 Hz, each with 4 harmonics, at the model's native sampling rate with duration 2 seconds. Signals are peak-normalized before encoding.

\textit{Reconstruction evaluation.} Reconstructed latents $\hat{z} = M\tilde{z}$ are passed through the model's quantizer and decoder to produce audio. Reconstruction fidelity is measured via log-mel distance computed with an 80-band mel spectrogram (FFT size 1024, hop length 256, sample rate 44100 Hz) and latent cosine similarity between $z$ and $\hat{z}$. 

\textit{Separability evaluation after refactorization.} The combined Gabor-plus-linear model is treated as a new encoder and analyzed using the same bandwidth sweep procedure described in Appendix~\ref{app:app_separability_analysis}. Filter validity, Best BW, and Best/RF are computed identically to the baseline analysis, allowing direct comparison.

\subsubsection{Controllability evaluation details}
\textit{Dataset construction.} We test on synthetic musical data, which provides a range of easily testable compositions of frequency components with clear semantic meaning. Mixture contexts are constructed by building three-component harmonic stacks over a two-octave chromatic root grid (C4–B5). A context is retained only if every component aliases to a Gabor bin at least 2 bins away from both the source and target bin under the folding map $\phi$. This collision exclusion ensures that filler components do not contaminate the source or target frequency proxies. For each operation, 200 contexts are sampled; where fewer unique clean contexts exist, sampling with replacement is used and noted. Operations are additionally required to have source and target bins separated by at least 2 bins; operations failing this criterion are excluded entirely.

\textit{Targeted substitution procedure.} For each context, a source chord and target chord are synthesized as sums of pure sinusoids, peak-normalized. Both are passed through the DAC encoder. In the Gabor-refactorized space, the signed energy delta at each bin is computed as $\Delta e_b = e_b^{\text{tgt}} - e_b^{\text{src}}$, where $e_b$ is mean bin energy over time. A trial is correct if $\Delta e_{b_s} < 0$ and $\Delta e_{b_t} > 0$, where $b_s$ and $b_t$ are the bins nearest the folded source and target frequencies respectively. In the original latent, the top-k channels most responsive to the source and target frequencies in isolation ($k = 5$) are identified by encoding single-frequency sinusoids. A trial is correct if the mean signed energy change over source-associated channels is negative and over target-associated channels is positive.

\textit{Consistency evaluation.} Profile cosine similarity is computed pairwise across all 200 context delta vectors within each space, using the full-dimensional signed delta profile (all 41 Gabor bins or all 1024 latent channels). Mean and standard deviation are reported over all $\binom{200}{2} = 19{,}900$ pairs. Transposition discrimination is computed by generating a matched transposed dataset (source and target frequencies multiplied by $5/4$, contexts collision-checked independently), computing mean intra-group cosine for original and transposed operations separately, and reporting the difference between mean intra-group and mean cross-group cosine.

\begin{figure}[t]
\centering
\includegraphics[width=\linewidth]{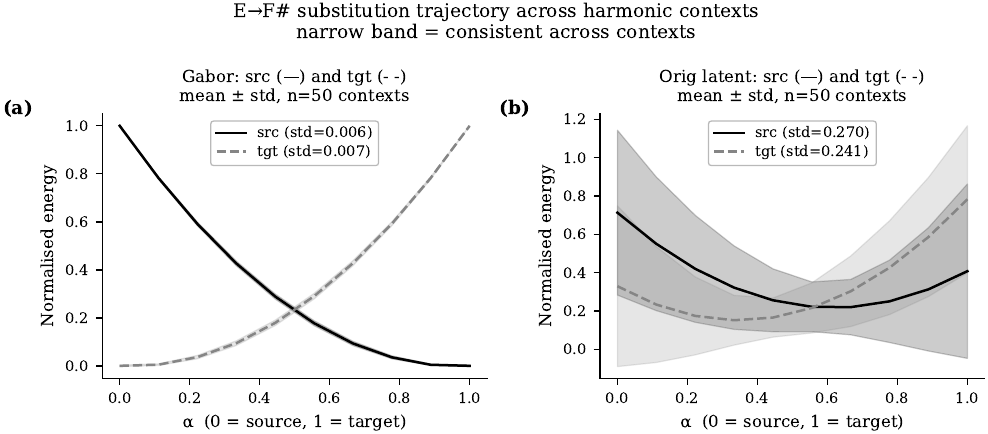}
\caption{Normalized source and target energy cross smoothly along the substitution trajectory in Gabor space, but not in latent space.}
\label{fig:latent-trajectories}
\end{figure}

\begin{figure}[t]
\centering
\includegraphics[width=\linewidth]{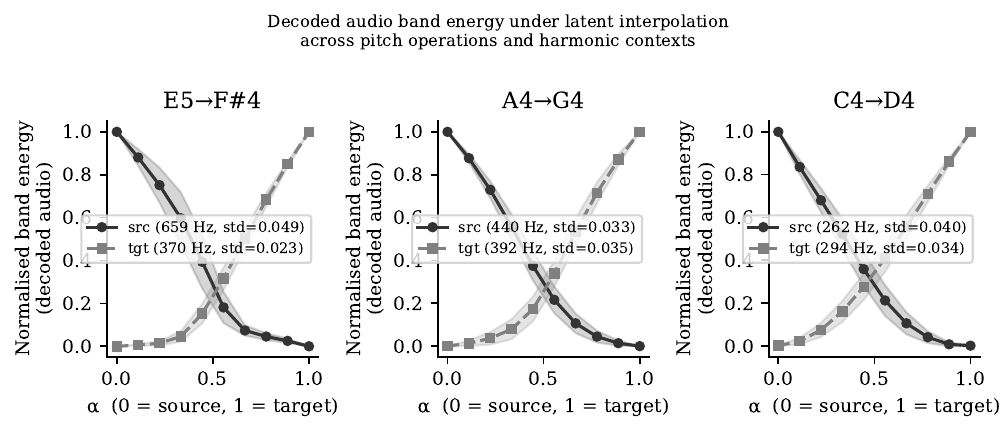}
\caption{Normalized source and target energy crossing smoothly in latent space translates to smooth motion along the substitution trajectory in output space.}
\label{fig:Output-trajectories}
\end{figure}

\begin{figure}[t]
    \centering
    \includegraphics[width=\linewidth]{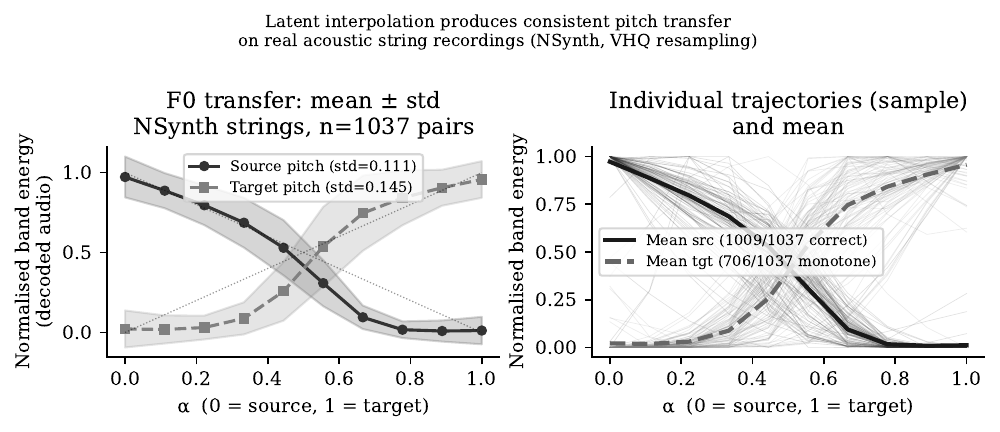}
    \caption{F0 transfer under latent interpolation on NSynth acoustic 
    string recordings ($n=1{,}037$ pitch pairs, 48 pitches, MIDI 
    36--84). \textit{Left:} Mean $\pm$ std of normalised band energy 
    trajectories. \textit{Right:} Individual pair trajectories (sample 
    of 100) with mean overlaid. Correct F0 transfer direction is 
    achieved in 97.3\% of pairs.}
    \label{fig:nsynth_f0}
\end{figure}

\subsection{Decoded Audio Controllability}
\label{sec:audible}

\paragraph{Synthetic chord evaluation.}
Figure~\ref{fig:latent-trajectories} shows normalized source and target energy 
along the substitution trajectory, with mean $\pm$ std across contexts. 
Gabor bins show tight, consistent crossing trajectories: energy smoothly 
disappears from the source bin while smoothly appearing in the target bin 
across all transformation contexts. Original latent channels show high 
variance and failure to differentiate source from target energy: both 
trajectories follow the same u-shape and variance is high, meaning the 
representation mounts a global signal response to the transformation 
rather than a change in one component alone.

To confirm that Gabor-space manipulation produces predictable audible 
consequences, we evaluate latent interpolation between chord contexts 
differing by a single pitch component across three pitch operations 
(C4$\to$D4, A4$\to$G4, E5$\to$F\#4) and 100 harmonic contexts each. 
For each context, we encode source and target chords through DAC, 
interpolate linearly between their latents ($\alpha \in [0,1]$), decode 
each interpolated latent, and measure band energy around the source and 
target pitch frequencies in the decoded audio. 
Figure~\ref{fig:Output-trajectories} shows that source pitch energy 
decreases monotonically while target pitch energy increases across all 
three operations, with standard deviations of $0.02$--$0.05$ — 
substantially below those of the original latent space ($0.24$--$0.27$, 
Fig.~\ref{fig:latent-trajectories}). This confirms that the structured 
geometry of the Gabor-refactorized space propagates faithfully through 
the decoder to the audible signal.

\paragraph{Real acoustic recordings.}
To evaluate generalization beyond synthetic signals, we test latent 
interpolation on real acoustic string instrument recordings from the 
NSynth dataset~\cite{engel2017neural}. We select acoustic string samples 
in the pitch range MIDI 36--84, retaining one recording per pitch and 
excluding the highest pitches where FFT-based F0 estimation is 
unreliable. This yields 48 pitches; all pairs with a minimum gap of 3 
semitones are evaluated, giving 1,037 pitch pairs. Recordings are 
natively 16\,kHz and resampled to 44.1\,kHz using SoXR very-high-quality 
resampling~\cite{soxr}; resampling introduces no measurable F0 error 
relative to scipy-based resampling (mean error $<1$\,Hz on all samples).

For each pair, we encode source and target recordings through DAC, 
interpolate linearly between their latents, decode each step, and measure 
band energy around each pitch's fundamental frequency. A trial is 
classified as correct if source-pitch energy at $\alpha=1$ is lower than 
at $\alpha=0$ and target-pitch energy at $\alpha=1$ is higher than at 
$\alpha=0$.

\begin{table}[t]
\centering
\small
\caption{Correct F0 transfer direction by interval size across 1,037 
NSynth acoustic string pitch pairs. Correct direction rate increases with 
interval size, consistent with larger intervals spanning more Gabor bins 
and being more reliably distinguished in the refactorized space.}
\label{tab:nsynth_results}
\begin{tabular}{lrr}
\toprule
Interval & Pairs & Correct direction \\
\midrule
3--5 semitones   & 129  & 124/129 \hspace{1em}(96\%) \\
6--11 semitones  & 231  & 226/231 \hspace{1em}(98\%) \\
12--23 semitones & 359  & 344/359 \hspace{1em}(96\%) \\
$>$24 semitones  & 318  & 315/318 \hspace{1em}(99\%) \\
\midrule
Total            & 1037 & 1009/1037 \hspace{0.5em}(97.3\%) \\
\bottomrule
\end{tabular}
\end{table}

Results are shown in Table~\ref{tab:nsynth_results} and 
Figure~\ref{fig:nsynth_f0}. Correct F0 transfer direction is achieved in 
$97.3\%$ of pairs. The correct direction rate increases monotonically 
with interval size, from $96\%$ for small intervals (3--5 semitones) to 
$99\%$ for large intervals ($>$2 octaves). This is consistent with the 
separability analysis in Section~\ref{sec:separability}: larger pitch 
intervals span more Gabor bins and are more reliably distinguished in the 
refactorized space, while small intervals near the Gabor resolution limit 
are harder to separate. The 28 incorrect-direction pairs are 
concentrated in the smallest interval category, further supporting this 
interpretation.

\subsubsection{Interpolation}
Linear interpolation between two tones (440 Hz and 660 Hz) in the Gabor-refactorized space produces smooth, perceptually consistent pitch transitions after decoding across all three models (Table~\ref{tab:interp_results}). Trajectories are monotone in perceptual distance from each endpoint and approximately uniform in latent step size. These results confirm that the refactorized space supports structured trajectories, though we note that interpolation acts on the whole signal and does not require component-level separability; it is provided as a sanity check on representational continuity rather than as a controllability result.

\begin{table}[t]
\centering
\small
\begin{tabular}{lccc}
\toprule
Model & Monotone & CV & Endpoint Distance \\
\midrule
EnCodec      & True & 0.289 & 1.3984 \\
DAC          & True & 0.242 & 2.0346 \\
Stable Audio & True & 0.277 & 1.4068 \\
\bottomrule
\end{tabular}
\caption{Interpolation results across models. All three models achieve monotone trajectories. CV measures uniformity of latent step sizes (lower = more linear). Endpoint distance measures the distance from start to end.}
\label{tab:interp_results}
\end{table}

\subsection{Compute Resources}
\label{app:app_compute_resources}
All experiments were conducted on a single NVIDIA A100 40GB GPU. No large-scale model training was performed; the only learned component is the GLRF linear mapping, which is fit via closed-form ridge regression. 

Runtimes were as follows. Injectivity analysis required approximately 1–4 minutes per model depending on architecture (total across all models: $\sim$7 minutes). Separability analysis required approximately 17 seconds to 4 minutes per model (total: $\sim$6 minutes). GLR fitting required under 1 minute total across all models. Evaluation required $\sim$15 minutes. All experiments combined required under 30 minutes of GPU time.

Pretrained model weights were downloaded from public repositories; download times are not included above. Preliminary experiments to create datasets, validate parameters and Gabor filterbank parameterization required approximately 3-4 additional hours of exploratory compute. Given the minimal compute requirements described here, the environmental impact of this work is negligible: by avoiding retraining with GLRF, we save environmental cost.

\subsection{Assets and Licenses}
\textit{Pretrained models.} The following pretrained encoder models are used in our analysis:

\begin{table}[t]
\centering
\small
\begin{tabular}{llll}
\toprule
Model & Version & License & URL \\
\midrule
EnCodec      & 48 kHz & MIT & github.com/facebookresearch/encodec \\
DAC          & 44.1 kHz & MIT & github.com/descriptinc/descript-audio-codec \\
Stable Audio & 44.1 kHz & Community License Agreement & https://huggingface.co/stabilityai/stable-audio-open-1.0 \\
\bottomrule
\end{tabular}
\caption{Pretrained models used in this work, with versions and licenses.}
\label{tab:licenses}
\end{table}

\textit{Released assets.} Code released with this paper is provided under the MIT License. The synthetic signal datasets used in our experiments are fully reproducible from the generation code included in the repository and require no additional data access.

\end{document}